\newcommand{\ignore}[1]{}
\newcommand{\Exp}{\EX}
\newcommand{\EX}{\hbox{$\mathbb E$}\xspace}
\def\2plus{{\tt (++)}}
\def\3plus{{\tt (+++)}}
\def\4plus{{\tt (++++)}}
\def\5plus{{\tt (+++++)}}
\crefname{@theorem}{theorem}{theorems}
\crefname{Conjecture}{Conjecture}{Conjectures}
\crefname{claim}{claim}{claims}
\newcommand{\etal}{\xspace et al.}
\newcommand{\mylabel}[2]{#2\def\@currentlabel{#2}\label{#1}}
\newlength{\algobox}
\newcommand{\E}[1]{\mathbb{E}\lb #1 \rb}
\newcommand{\Prob}[1]{\Pr\lb #1 \rb}
\newcommand{\Zp}{\mathbb Z^+}
\newcommand{\lpn}{\left(}
\newcommand{\rpn}{\right)}
\newcommand{\lbar}{\left|}
\newcommand{\rbar}{\right|}
\newcommand{\lbr}{\left\{}
\newcommand{\rbr}{\right\}}
\newcommand{\lb}{\left[}
\newcommand{\rb}{\right]}
\newcommand{\free}{\mathrm{free}}
\newcommand{\given}{\ |\ }
\newcommand{\mgiven}{\ \big|\ }
\newcommand{\bgiven}{\ \bigg|\ }
\newcommand{\thh}{^\text{th}}
\newcommand{\npColor}{{\sf Decentralized Coloring}\xspace}
\newcommand{\pColor}{{\sf Persistent Decentralized Coloring}\xspace}
\newcommand{\recolors}{\mathrm{recolors}}
\newcommand{\ind}[1]{{\mathlarger{\mathds{1}}}_{\lbr#1\rbr}}
\newcommand{\qed}{\qquad\vbox{\hrule height0.6pt\hbox{%
\vrule height1.3ex width0.6pt\hskip0.8ex
\vrule width0.6pt}\hrule height0.6pt
}\outerparskip 0pt} 
\renewcommand{\varphi}{\Phi}
\newtheorem{theorem}{Theorem}
\newtheorem{Example}{Example}
\newtheorem{Remark}{Remark}
\newtheorem{Conjecture}{Conjecture}
\newtheorem{lemma}{Lemma}
\title{On a Decentralized $(\Delta{+}1)$-Graph Coloring Algorithm}
\author{Deeparnab Chakrabarty\footnote{Dartmouth College. Email: {\tt deeparnab@dartmouth.edu}. Supported in part by CCF-1813053}\and Paul de Supinski\thanks{Work performed as part of an  undergraduate honors thesis~\cite{paul} at Dartmouth. Email: {\tt pdesupinski@yahoo.com}}}
\date{}
\begin{document}
\maketitle
\begin{abstract} 
	We consider a decentralized graph coloring model where each vertex only knows its own
	color and whether some neighbor has the same color as it. The networking community has studied this model extensively due to its
	applications to channel selection, rate adaptation, etc. 
	Here, we analyze variants of a simple algorithm of Bhartia et al. [Proc., ACM MOBIHOC, 2016]. In particular, we introduce a variant which requires only $O(n\log\Delta)$ expected recolorings that generalizes the coupon collector problem. Finally, we show that the $O(n\Delta)$ bound Bhartia et al. achieve for their algorithm still holds and is tight in adversarial scenarios.
\end{abstract}
  
\section{Introduction}
It is well known that an undirected graph $G=(V,E)$ with maximum degree $\Delta$ can be properly vertex-colored using $(\Delta{+}1)$ colors.
The simple ``greedy'' algorithm makes one pass over the nodes, giving each node one of the colors not currently used by its neighbors.

Motivated by applications to channel selection for access points, Bhartia\etal~\cite{iq} investigate highly constrained decentralized algorithms for graph coloring. In their setting, the only information a vertex knows at any time is its own color and whether at least one adjacent vertex has the same color. 
In the case of networking applications, nodes correspond to access points, colors correspond to transmission channels, and edges correspond to whether two access points interfere with each other when transmitting in the same channel. Fittingly, an access point only knows its own channel and whether some neighbor is using the same channel, which can be inferred from the resulting packet loss. Accordingly, the networking community has studied this model extensively ~\cite{duffy, duffy2, galan, fitz, leith, roughgarden, checco, iq}. This model is sometimes called the {\em conflict detection model}~\cite{roughgarden}.

In the stylized setting below, we describe the algorithm proposed by Bhartia\etal~\cite{iq}, arguably the simplest and most natural one for the model. 
This algorithm proceeds over time and maintains a coloring $\chi_t: V \to \{1,2,\ldots, \Delta+1\}$. A vertex $v$ is \textit{conflicted} at time $t$ if 
there is some neighbor $u$ of $v$ such that $\chi_t(u) = \chi_t(v)$.

\begin{mdframed}[backgroundcolor=gray!20,linecolor=white]
	\noindent
	\underline{\npColor} ($G = (V,E)$)
	\begin{enumerate}
		\item Initially, every vertex $v$ chooses a color $\chi_0(v)$ at random from $\{1,2,\ldots,\Delta+1\}$. \label{alg:1}
		\item At each time $t$, a vertex $v$ is chosen uniformly at random among all conflicted vertices. \label{alg:2}
		\item $v$ changes its color to a random color in $\{1,2,\ldots, \Delta+1\}$. \label{alg:3}
		\item Steps~\ref{alg:2} and~\ref{alg:3} repeat until there are no conflicted vertices.
	\end{enumerate}
\end{mdframed}
In the decentralized model, Bhartia\etal\cite{iq}~implement this algorithm by having vertices wait random amounts of time between recolors (Steps ~\ref{alg:3} and ~\ref{alg:2}). They also prove that the algorithm converges to a proper $(\Delta{+}1)$-coloring in $O(n\Delta)$ expected recolorings. However, our results, which now summarize, strongly suggest that this bound is \textit{not} tight. As an introduction, consider the special case when the graph is a clique, which turns out to be trivial.

\begin{Example}\label{ex:clique}
	Let $H_k \coloneqq \sum_{i=1}^k \frac 1 i $ be the $k\thh$ harmonic number. On $K_n$, the clique of $n$ vertices, \npColor converges to a $(\Delta{+}1)$-coloring in exactly $nH_n = \Theta(n\log n)$ expected recolorings.
		
	To see this, observe that \npColor is essentially the coupon collector process on cliques.
	That is, all $n$ vertices require different colors, and no color once in the graph can ever be fully removed from the graph. Hence, the process terminates when all $n$ colors have been chosen exactly once. Thus, the number of recolorings (including the initial $n$ from Step~\ref{alg:1}) is precisely the number of draws to obtain all $n$ coupons, whose expected value is well known to be $nH_n$. 
\end{Example}

\noindent
Our first contribution is to introduce a variant of \npColor which is easier to analyze. The sole difference is the while-loop of Step~\ref{alg2:3}
\begin{mdframed}[backgroundcolor=gray!20,linecolor=white]
	\noindent
	\underline{\pColor} ($G = (V,E)$)
	\begin{enumerate}
		\item Initially, every vertex $v$ chooses a color $\chi_0(v)$ at random from $\{1,2,\ldots,\Delta+1 \}$. \label{alg2:1}
		\item At each time $t$, a vertex $v$ is chosen uniformly at random among all conflicted vertices. \label{alg2:2}
		\item While $v$ is conflicted, it keeps changing to a random color in $\{1,2,\ldots, \Delta+1\}$. \label{alg2:3}
		\item Steps~\ref{alg:2} and~\ref{alg:3} repeat until there are no conflicted vertices.
	\end{enumerate}
\end{mdframed}

As our second contribution, we prove the following theorem in~\Cref{sec:proof}.
\begin{restatable}{theorem}{main}\label{thm:main}
	The \pColor algorithm converges to a proper $(\Delta{+}1)$-coloring in $O(n\log \Delta)$ expected recolorings.
\end{restatable}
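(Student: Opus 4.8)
The plan is to reduce the count of recolorings to a sum of geometric random variables by exploiting two structural features of \pColor, and then to bound that sum by a coupon-collector-type argument.

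\textbf{Structural reductions.} First I would observe that a vertex changes its color \emph{only} while it is the selected vertex in the while-loop of Step~\ref{alg2:3}; consequently any vertex that has never been selected still carries its initial color $\chi_0(\cdot)$, which is uniform on $\{1,\dots,\Delta+1\}$ and independent across such vertices. Second, and crucially, I would show that \emph{once a vertex is happy (non-conflicted) it remains happy forever}: when the selected vertex $v$ finally exits the while-loop it holds a color used by none of its neighbors, so no edge incident to $v$ is monochromatic and no new conflict is created; moreover any \emph{later} round that recolors a neighbor $u$ of $v$ ends with $u$ taking a color avoiding all of $u$'s neighbors, in particular $\chi(v)$, so $v$ is never re-conflicted. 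Hence each vertex is selected at most once, the number of rounds is at most $n$, and there are at most $n$ ``successful'' final recolorings.

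\textbf{Reduction to a sum of geometrics.} Let $g_v$ denote the number of \emph{free} colors for $v$ at the moment it is selected, i.e.\ colors used by no neighbor of $v$; since $\deg(v)\le\Delta$ we always have $g_v\ge 1$. Within $v$'s round the number of recolorings is geometric with success probability $g_v/(\Delta+1)$, so its conditional expectation is $(\Delta+1)/g_v$. Including the $n$ initial recolorings of Step~\ref{alg2:1}, the expected total is therefore
\[
 n \;+\; \mathbb{E}\Big[\ \sum_{v\ \text{selected}} \frac{\Delta+1}{g_v}\ \Big].
\]
It remains to show this sum is $O(n\log\Delta)$.

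\textbf{Bounding the sum.} This is where the ``generalized coupon collector'' intuition enters. The model case is $K_n$ (\Cref{ex:clique}): there a free color is exactly a globally unused color, each round increases the number of used colors by one, so the free-color counts across rounds are exactly $n-c_0,\dots,2,1$ and the sum telescopes to the harmonic sum $\Theta(n\log n)$. For a general graph I would mimic this by a dyadic decomposition: call a round \emph{level-$j$} when $g_v\in\big((\Delta+1)2^{-(j+1)},(\Delta+1)2^{-j}\big]$, so a level-$j$ round costs at most $2^{j+1}$ expected recolorings, and I would aim to prove that the expected number of level-$j$ rounds is $O(n\,2^{-j})$; summing the resulting $O(n)$ per-level cost over the $O(\log\Delta)$ levels then yields $O(n\log\Delta)$. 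To bound rounds with few free colors I would invoke the freshness observation: a round has small $g_v$ only if $v$'s neighbors already cover almost the entire palette, and since the yet-unselected neighbors carry independent uniform colors they cover only a constant fraction in expectation, so a small $g_v$ forces many \emph{already-happy} neighbors with distinct colors — an event that, as in the clique endgame, can occur only boundedly often.

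\textbf{Main obstacle.} The hard part is exactly this last step: unlike on $K_n$, in a general graph the quantities $g_v$ across different rounds are neither distinct nor independent, and the random selection order is itself correlated with the colors, so converting the per-round geometric means into a global $O(n\log\Delta)$ bound requires a careful argument — a coupling with coupon collector, or a charging/martingale argument over the processing order — establishing that rounds with few free colors are rare. I expect this tail control on the distribution of the free-color counts $g_v$ to be the crux of the proof.
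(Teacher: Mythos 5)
Your two structural reductions are correct and coincide with the opening moves of the paper's own proof: in \pColor a vertex that is happy at the end of a round stays happy forever, so no vertex is ever selected twice, Step~\ref{alg2:2} can be simulated by a uniformly random permutation $\pi$ of the vertices, and a selected vertex's round is geometric with success probability $\free/D$ where $D = \Delta+1$. The genuine gap is exactly where you flag it: the claim that the expected number of level-$j$ rounds is $O(n2^{-j})$ \emph{is} the theorem, and your proposal contains no proof of it. The heuristic you offer does not close it: ``rounds with few free colors occur only boundedly often'' is false as a global statement --- on a disjoint union of $(\Delta{+}1)$-cliques there are $\Theta(n/\Delta)$ rounds with $g_v = 1$ and $\Theta(n)$ rounds with $g_v \le \Delta/2$ --- so any correct version must be a per-level, per-neighborhood counting bound, and proving one runs head-on into the correlations (between selection order, colors, and which vertices are already fixed) that you yourself identify as the obstacle.

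The paper never establishes any distributional control on the $g_v$'s across rounds; it sidesteps the global argument by localizing to a single vertex and coupling. Fix $v$ with $d = \deg(v)$ and run \pColor on a $(d{+}1)$-clique $K$ with the same $D$ colors, coupling $K$'s initial coloring and ordering to those of $\Gamma(v)\cup\{v\}$. \Cref{clm:ub} shows that, \emph{pointwise} in $(\pi,\chi_0)$, the coupled clique vertex $w$ recolors at least as much in expectation as $v$: whether $v$ recolors at all depends only on whether $\chi_0(v)$ appears among the initial colors of the neighbors coming after $v$ in $\pi$ (so $v$ recolors iff $w$ does), and the clique configuration minimizes the number of free colors, since the neighbors fixed before $w$'s turn are forced onto distinct colors disjoint from those of the later ones. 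Symmetry of the clique (\Cref{clm:clq}) then equates $w$'s expected recoloring count with $\frac{1}{d+1}$ times the clique's total, which is an honest coupon-collector quantity bounded by $(d{+}1)H_d$ (\Cref{eq:cc}). This yields $\Exp[\recolors(v)] \le H_d \le H_\Delta$ for every vertex (\Cref{lem:main}), and the theorem follows by linearity of expectation. If you want to complete your write-up along your own lines, this comparison-to-the-clique is the missing ingredient: under the natural coupling, the worst case for a single vertex's free-color count is the clique on its neighborhood, and the clique is exactly the coupon-collector process you already understand.
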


For our final two contributions, we analyze adversarial variants of the two algorithms. For either algorithm, if we allow an adversary to choose the initial coloring $\chi_0$ in Step~\ref{alg:1}, we say the algorithm uses an \textit{adversarial start}. Similarly, if we allow an adversary to choose the conflicted vertex in Step~\ref{alg:2}, we say the algorithm uses an \textit{adversarial order}. 
\begin{Remark}
	In adversarial order \npColor, the adversary could choose vertices so as to mimic \pColor.
	Therefore, a lower bound for \pColor implies a lower bound for adversarial order \npColor.
\end{Remark}

It may be interesting to ponder whether the algorithms even converge given one or both modifications.

As our third contribution, in~\Cref{sec:drift} we show that in fact \npColor still only requires $O(n\Delta)$ expected recolorings in the adversarial start, adversarial order case. 
\begin{restatable}{theorem}{ub}\label{thm:ub}
	Even with an adversarial initial coloring $\chi_0$ in Step~\ref{alg:1} and an adversarial choice of conflicted vertices in Step~\ref{alg:2},
	the \npColor algorithm converges to a proper $(\Delta{+}1)$-coloring in $O(n\Delta)$ expected recolorings.
\end{restatable}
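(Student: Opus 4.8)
The plan is to run a drift (Lyapunov) argument on the number of monochromatic edges, $\Phi_t := \bigl|\{(u,w)\in E : \chi_t(u)=\chi_t(w)\}\bigr|$, which is $0$ exactly when the coloring is proper. First I would compute the one-step drift when an \emph{arbitrary} conflicted vertex $v$ (current color $c$) is recolored. Writing $a_j$ for the number of neighbours of $v$ colored $j$, recoloring $v$ only affects edges incident to $v$, so $\Phi$ changes by $a_{c'}-a_c$ where $c'$ is the new uniform color. Since $\sum_j a_j=\deg(v)\le\Delta$ we get $\mathbb{E}[a_{c'}]=\deg(v)/(\Delta+1)<1$, while $a_c\ge 1$ because $v$ is conflicted, so
\[
\mathbb{E}[\Phi_{t+1}-\Phi_t \mid \mathcal{F}_t] \;=\; \frac{\deg(v)}{\Delta+1}-a_c \;\le\; -\frac{1}{\Delta+1}.
\]
The key feature is that this bound holds for \emph{every} conflicted vertex and \emph{every} coloring, so it is immune to both an adversarial start and an adversarial order. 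Combined with $\Phi_0\le|E|\le n\Delta/2$, the additive drift theorem already proves convergence and an $O(n\Delta^2)$ bound on the expected number of recolorings.

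To close the remaining factor of $\Delta$ (the slack comes entirely from steps where the adversary recolors a vertex with $a_c=1$), I would relate the total number of recolorings $R$ to the number of \emph{successful} recolorings $S$, namely those after which the recolored vertex is conflict-free. Because each vertex has at most $\Delta$ neighbours, at least one of the $\Delta+1$ colors is free for it, so each recoloring is successful with probability at least $1/(\Delta+1)$ conditioned on the whole history, independently of the adversary's choices. Optional stopping applied to the martingale $S_t-\sum_{s<t}p_s$ (with $p_s\ge 1/(\Delta+1)$) then yields $\mathbb{E}[R]\le(\Delta+1)\,\mathbb{E}[S]$, so it suffices to prove $\mathbb{E}[S]=O(n)$, which would give the claimed $O(n\Delta)$.

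Bounding $\mathbb{E}[S]$ is the heart of the argument and the step I expect to be the main obstacle. The structural fact I would exploit is that a \emph{successful} recoloring moves $v$ to a color used by none of its neighbours and therefore creates no new conflicts; only \emph{failed} recolorings can push a previously conflict-free neighbour into conflict, and these failures are simultaneously paid for by a genuine decrease of $\Phi$. The natural approach is to charge each success to a conflict it permanently resolves and argue that a vertex, once made conflict-free, is re-conflicted only rarely. The hard part is that this charging is naively circular: every recoloring can both destroy and create a conflict, and the per-step conflict-creation rate is $\deg(v)/(\Delta+1)=\Theta(1)$, so the crude conservation identities (``conflicts created $\le$ conflicts destroyed'', together with $\mathbb{E}[R]\le(\Delta+1)\mathbb{E}[S]$) only reproduce an inequality of the form $\mathbb{E}[R]\le(\Delta+1)\mathbb{E}[R]$ and give nothing.

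Breaking this circularity will require using the randomness of the color choice beyond its first moment, to show that the adversary cannot \emph{sustain} re-conflicting, so that the total number of ``conflict-free'' events remains linear in $n$. Concretely, I would introduce a secondary potential tracking conflict-free vertices that charges failures (but not successes) against the strict decrease of $\Phi$ guaranteed above, and combine it with the monochromatic-edge martingale, which caps the total conflict-creation budget. Since the entire analysis uses only that (i)~each recolored vertex is conflicted, so $a_c\ge 1$, and (ii)~the new color is uniform and independent of the adversary's decisions, the final bound will hold simultaneously under the adversarial start and the adversarial order, exactly as the statement requires.
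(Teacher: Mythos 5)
There is a genuine gap: the part of your argument that is actually carried out --- the drift of the monochromatic-edge count $\Phi_t$ --- yields only $O(n\Delta^2)$ under an adversarial start, since $\Phi_0$ can be as large as $n\Delta/2$; this is the same obstruction the paper attributes to the analysis of Bhartia et al. Everything you propose in order to close the remaining factor of $\Delta$ rests on the unproven claim $\mathbb{E}[S]=O(n)$, and you candidly demonstrate why your own tools cannot reach it: first-moment accounting gives only $\mathbb{E}[S]\le n+\mathbb{E}[R]$ and $\mathbb{E}[R]\le(\Delta+1)\mathbb{E}[S]$, which together say nothing. The ``secondary potential tracking conflict-free vertices'' is never defined, and no argument is given that any such quantity has the two properties one simultaneously needs against both adversaries: initial distance $O(n)$ from its terminal value, and per-recoloring drift $\Omega(1/\Delta)$ no matter which conflicted vertex is chosen. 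The two natural candidates each fail one of these: conflicted edges fail the first (your stage one), and conflicted vertices fail the second --- \Cref{posDrift} in the paper shows an adversarial selection under which the conflicted-vertex count \emph{increases} in expectation. So the proposal, as written, proves $O(n\Delta^2)$ and conjectures the rest.

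The missing idea is the right potential. The paper takes $\Phi(\chi)$ to be the number of \emph{monochromatic connected components} of $\chi$; then $1\le\Phi(\chi)\le n$ with equality to $n$ iff $\chi$ is proper, so an adversarial start costs at most $n-1$, and \Cref{lem:potn} shows that recoloring \emph{any} conflicted vertex $v$ increases $\Phi$ by at least $1/(\Delta+1)$ in expectation: all components of the new color touching $\Gamma(v)\cup\{v\}$ merge into one, components of other colors never merge, and --- because $v$'s old component contains both $v$ and a conflicting neighbor --- the components touching $\Gamma(v)\cup\{v\}$ number at most $\Delta$ in total. Your outer framework (a one-step drift bound, valid for every conflicted vertex and every coloring, fed into an optional-stopping/Wald argument) is exactly the paper's; combined with its \Cref{stopping}, the component potential immediately gives $\mathbb{E}[\tau(G)]\le(n-1)(\Delta+1)$. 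Whether your $\mathbb{E}[S]=O(n)$ route can be made rigorous is unclear --- it is at least consistent with the paper's $K_{\Delta,\Delta}$ lower-bound instance --- but nothing in your sketch establishes it, and it is not needed once the potential is chosen correctly.
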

In other words, we achieve the same bound as Bhartia\etal\cite{iq}, whose proof would only yield an $O(n\Delta^2)$ bound in this case, while forgoing the randomness from all but Step~\ref{alg:3}. 

Encouraged by~\Cref{thm:main} and the clique example, one may simply conjecture that all variants require only $O(n\log\Delta)$ recolorings. However, our fourth contribution is a counterexample we give in~\Cref{sec:counter} showing that~\Cref{thm:ub} is tight in certain cases, even when the order of vertices is still random.

\begin{theorem}\label{thm:counter}
	With an adversarial initial coloring $\chi_0$ in Step~\ref{alg:1} and random choice of vertices in Step~\ref{alg:2}, \pColor requires $\Omega(n\Delta)$ expected recolorings in the worst case.
\end{theorem}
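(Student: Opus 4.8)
The plan is to exhibit, for infinitely many pairs $(n,\Delta)$ (we may assume $\Delta$ even, the odd case being a trivial modification), a graph $G$ of maximum degree $\Delta$ together with an adversarial initial coloring $\chi_0$ on which \pColor performs $\Omega(n\Delta)$ expected recolorings, no matter how Step~\ref{alg2:2}'s random choice falls. The first and most important fact I would isolate is a monotonicity property: a persistent recolor in Step~\ref{alg2:3} ends only when the chosen vertex lands on a color unused by its neighbors, so it never creates a new conflict at a neighbor; hence the set of conflicted vertices is non-increasing, and each persistent recolor permanently removes at least one vertex from it. Consequently the total cost is governed by (i) how many persistent recolors are forced and (ii) how expensive each one is, where a single persistent recolor of a vertex seeing $d$ distinct colors among its neighbors takes a geometric number of steps with success probability $\tfrac{\Delta+1-d}{\Delta+1}$, i.e.\ expected $\tfrac{\Delta+1}{\Delta+1-d}$. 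To reach $\Omega(n\Delta)$ I want $\Omega(n)$ persistent recolors, each facing $d=\Delta$ distinct neighboring colors so that it costs $\Theta(\Delta)$ in expectation.

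The gadget I would use packs $\Theta(\Delta)$ such expensive recolors onto $\Theta(\Delta)$ vertices. It consists of $\Delta$ ``payload'' vertices split into $\Delta/2$ adjacent pairs, together with $\Delta-1$ ``helper'' vertices $h_1,\dots,h_{\Delta-1}$; each helper is joined to all $\Delta$ payload vertices (so every vertex has degree exactly $\Delta$), with no other edges. The adversarial start sets $\chi_0(h_c)=c$ and colors every payload vertex $\Delta+1$. I would first verify the two structural facts that make the analysis clean: each helper $h_c$ has color $c\le \Delta-1$ and is adjacent only to payload vertices, whose colors only ever take the values $\Delta+1$ (initial) or $\Delta$ (the unique free color they are forced onto, computed below), so \emph{no helper is ever conflicted}; and the two endpoints of a pair are each other's only payload neighbor, so distinct pairs interact only through the static helpers and are therefore independent.

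For a single pair $(a,a')$ with both endpoints still colored $\Delta+1$, the neighbors of $a$ show the colors $\{1,\dots,\Delta-1\}$ (helpers) and $\Delta+1$ ($a'$), i.e.\ exactly $\Delta$ distinct colors, leaving $\Delta$ as the unique free color. Whichever of $a,a'$ Step~\ref{alg2:2} selects first must therefore run a persistent recolor that terminates only upon drawing color $\Delta$, costing an expected $\Delta+1$ draws; and once that endpoint sits at $\Delta$, the other endpoint sees all of $\{1,\dots,\Delta\}$ and becomes happy ``for free.'' Thus, regardless of the random order and of interleaving across gadgets, every pair contributes exactly one persistent recolor of expected cost $\Delta+1$ --- this order-robustness is precisely where the monotonicity and helper-stability facts are used. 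Taking $G$ to be a disjoint union of $m=\Theta(n/\Delta)$ copies of the gadget (padding with properly colored isolated vertices to hit $n$ exactly) yields $m\cdot\tfrac{\Delta}{2}$ forced persistent recolors and expected cost $m\cdot\tfrac{\Delta}{2}\cdot(\Delta+1)=\Omega(n\Delta)$.

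The step I expect to be the genuine obstacle is the gadget design itself rather than the final arithmetic. A naive ``rainbow neighborhood'' gadget with private helpers wastes a $\Delta$ factor of vertices and yields only $\Omega(n)$ in total, so the construction must \emph{share} the $\Delta-1$ helpers among all $\Delta$ payload vertices while (a) keeping every degree equal to $\Delta$ and (b) guaranteeing those shared helpers never themselves become conflicted and recolor --- if a helper ever moved, it could destroy the rainbow seen by several payload vertices and make their recolors cheap. Pinning down invariants (a) and (b), and confirming that the uniform-random choice of conflicted vertex in Step~\ref{alg2:2} cannot reduce the number of forced expensive recolors below one per pair, is the crux; the remainder is the geometric-cost computation already sketched.
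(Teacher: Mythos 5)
Your proposal is correct, but your construction differs from the paper's in a way worth noting. The paper uses a single $K_{\Delta,\Delta}$ with all left vertices colored green and a rainbow (including green) right side: each left vertex then has exactly one free color and costs $\Delta+1$ expected recolorings when selected, while the unique conflicted right vertex acts as a ``kill switch'' that ends the process once selected. The lower bound there comes from a race argument --- under the random order, in expectation half of the $\Delta$ left vertices are selected (and pay $\Theta(\Delta)$ each) before the kill switch, giving $\Omega(\Delta^2)=\Omega(n\Delta)$ with $n=2\Delta$. Your shared-helper gadget replaces this race with forced work: every payload pair contributes \emph{exactly} one persistent recolor of expected cost $\Delta+1$ regardless of how the random order falls, so your bound holds conditionally on any order, not just in expectation over orders. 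Your approach also explicitly handles arbitrary $n$ versus $\Delta$ via disjoint unions and padding, which the paper leaves implicit (its example pins $n=2\Delta$). Both proofs rest on the same core mechanism --- an adversarial start in which conflicted vertices see $\Delta$ distinct neighbor colors, plus the observation that helpers (or the rainbow side) never recolor --- so what your extra gadget machinery buys is robustness and generality, at the cost of the paper's two-line simplicity. One small point of care that you handled correctly and that is essential in both arguments: Step~\ref{alg2:3} is atomic, so the conflicted set only needs to be examined at loop boundaries, where your invariant that helpers never become conflicted (and the paper's analogous fact that the rainbow right side never recolors) indeed holds.
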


Notably, this counterexample will not apply to random order \npColor. So, finally, we offer the following conjecture, which motivated this research, but whose proof eludes us.

\begin{Conjecture}\label{conj:NP}
	The \npColor algorithm finds a proper $(\Delta{+}1)$-coloring in $O(n\log \Delta)$ recolorings.
\end{Conjecture}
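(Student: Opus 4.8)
\Cref{thm:main} already achieves the target bound for \pColor by generalizing the coupon-collector process, so the guiding principle is to transport that argument to \npColor. The natural first move is to fix a potential and compute its one-step drift. The most tempting choice is the number of monochromatic edges $M_t = |\{uv\in E:\chi_t(u)=\chi_t(v)\}|$: if the chosen conflicted vertex $v$ has degree $d_v$ and carries $m_v\ge 1$ monochromatic edges, then recoloring it uniformly removes all $m_v$ of them and creates in expectation only $d_v/(\Delta+1)<1$ new ones, so $\mathbb{E}[M_t-M_{t+1}\mid\mathcal F_t]\ge m_v-d_v/(\Delta+1)\ge 1/(\Delta+1)$ \emph{for every} conflicted $v$. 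Since the random start of Step~\ref{alg:1} gives $\mathbb E[M_0]=O(n)$, the additive drift theorem yields $O(n\Delta)$ recolorings; being insensitive to which conflicted vertex is picked, this is essentially the robust estimate behind \Cref{thm:ub}, and it shows that any improvement to $O(n\log\Delta)$ must genuinely exploit the randomness that \npColor discards in \Cref{thm:ub}.

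Crucially, $M_t$ is provably the \emph{wrong} potential for the logarithmic bound, and recognizing this is the first real step. Already on the clique of \Cref{ex:clique}, where the process is exactly coupon collector, a recolored vertex that lands on an already-occupied color creates a triple and thereby \emph{increases} $M_t$, so $M_t$ does not track coupon-collector progress and its drift collapses to $\Theta(1/\Delta)$ in the late stage. The correct potential is the coupon-collector quantity itself --- the number of colors still ``missing'' where they are needed --- and the plan is to identify the graph-theoretic generalization used to prove \Cref{thm:main} and re-run it for \npColor. The key per-step estimate is that a conflicted vertex with $f_v$ locally free colors becomes happy with probability $f_v/(\Delta+1)$, the analogue of drawing a fresh coupon; if the free counts sweep through $1,\dots,\Delta$ as the coloring fills in, summing $(\Delta+1)/f_v$ reproduces exactly the $\Theta(H_{\Delta+1})=\Theta(\log\Delta)$ per-vertex cost behind the $O(n\log\Delta)$ target.

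The genuinely new difficulty, absent in \pColor, is non-persistence. In \pColor a selected vertex is hammered until it is happy, so the number of selections is at most $n$ and the potential is monotone; in \npColor a single recoloring (i) may fail and leave $v$ conflicted, and (ii) may create fresh conflicts at neighbors, so the potential is not monotone. I would bound the expected number of newly created conflicts per step by $d_v/(\Delta+1)<1$ and charge them against progress elsewhere, and handle the failures by arguing that, under the uniformly random order, each conflicted vertex is revisited often enough that the expected total number of recolorings it absorbs before resolving is again coupon-collector--controlled rather than a worst-case $\Delta$ paid up front. The random order is precisely what prevents us from being forced to spend $\Theta(\Delta)$ on a single stubborn high-degree vertex, which is exactly the spending pattern that the adversarial-start persistence of \Cref{thm:counter} imposes.

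The main obstacle --- and the reason the conjecture remains open --- is controlling the resolution probability $f_v/(\Delta+1)$ over the whole run. These probabilities depend on the current coloring, and the correlated dynamics could in principle drive it toward a ``nearly full'' state in which free-color counts, and hence resolution probabilities, collapse back to $\Theta(1/\Delta)$; this is exactly the $\Omega(n\Delta)$ regime that \Cref{thm:counter} realizes from an adversarial start. What is needed is a quantitative, high-probability guarantee that random start together with non-persistent random order keeps the free-color counts large (equivalently, keeps the coupon-collector potential decaying at the right rate) \emph{uniformly across all} $O(n\log\Delta)$ \emph{steps}, despite the history-dependence of each recoloring and the non-monotonicity of the potential. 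Establishing this invariant is where both the authors' argument and mine stall; the most promising route I see is to prove that the clique of \Cref{ex:clique} is extremal by coupling the general process to it and transferring its $nH_n=\Theta(n\log\Delta)$ guarantee.
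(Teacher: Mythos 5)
This statement is \Cref{conj:NP}, which the paper explicitly leaves open (``whose proof eludes us''), so there is no paper proof to match against; the question is only whether your proposal closes the gap, and it does not --- as you yourself concede, it is a plan that ``stalls'' at the decisive step. To its credit, the plan correctly reconstructs the surrounding landscape: your monochromatic-edge drift bound $m_v - d_v/(\Delta+1) \geq 1/(\Delta+1)$ is sound (modulo the slip that recoloring ``removes all $m_v$'' edges --- it does not when the new color equals the old, though the expected-drift computation is still right) and, with a random start giving $\mathbb{E}[M_0] = O(n)$, it recovers the Bhartia et al.\ bound; note that the paper's own \Cref{thm:ub} instead uses monochromatic \emph{connected components} precisely because the edge potential degrades to $O(n\Delta^2)$ under adversarial starts, a distinction immaterial for the random-start conjecture. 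Your diagnosis that one must control the free-color counts $f_v$ uniformly over the run is also the right identification of the difficulty. But identifying the needed invariant is not proving it, and no mechanism is offered.

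The concrete obstruction to your proposed repair --- ``couple the general process to the clique and transfer \Cref{ex:clique}'' --- is that this is exactly the strategy of \Cref{lem:main} and \Cref{clm:ub}, and that strategy is load-bearing on \emph{persistence}. The paper's coupling works only because in \pColor no vertex is ever selected twice in Step~\ref{alg2:2}, so the selection order can be simulated by a single uniform permutation $\pi$, each vertex has a well-defined single ``moment it begins recoloring,'' and one can split $\Gamma(v)$ into $B_\pi(v)$ and $A_\pi(v)$ and compare $\free_{\pi,\chi_0}(v)$ against the clique at that one moment. In \npColor a vertex can be reselected arbitrarily often, a failed recoloring can re-conflict already-settled neighbors, and there is no permutation simulation, no before/after decomposition, and no single-moment free-count comparison --- so the entire scaffolding of \Cref{sec:proof} collapses rather than transfers. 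Moreover, clique extremality cannot hold per-instance: \Cref{thm:counter} exhibits a configuration ($K_{\Delta,\Delta}$ with an adversarial $\chi_0$) costing $\Omega(n\Delta)$, whereas the clique is cheap from \emph{any} start, so any coupling must be distributional over $\chi_0$ and must track how the random start's slack survives the history-dependent, non-monotone dynamics --- which is the open problem restated, not resolved. Your write-up is an honest and well-informed account of why \Cref{conj:NP} is hard, but it is not a proof.
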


\section{Persistent Decentralized Coloring}
\subsection{Adversarial Start, Random order}\label{sec:counter}~\\
As a warmup, we begin with the counterexample proving~\Cref{thm:counter}.

\begin{proof}
	\begin{figure}[ht!]
		\centering
		\includegraphics[scale = .85]{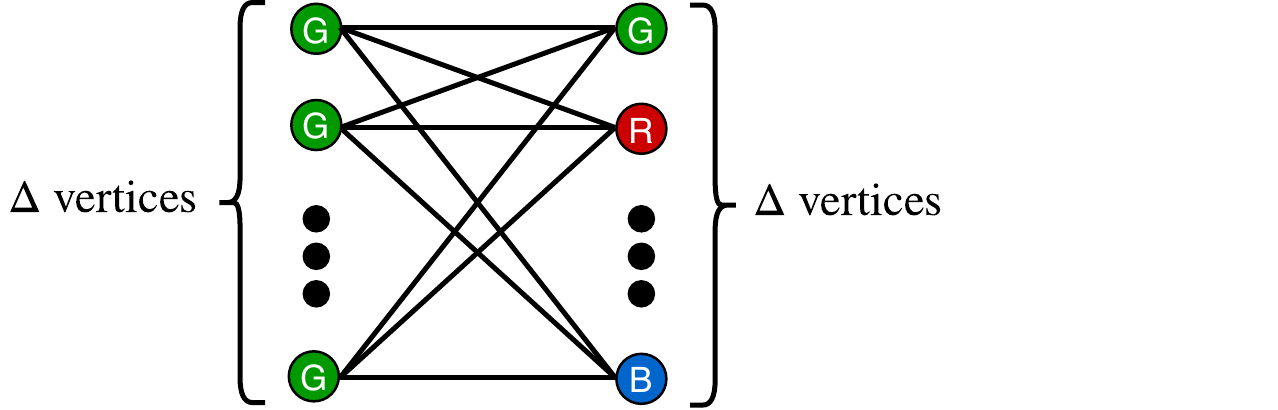}
		\caption{A bad initial coloring for \pColor}
		\label{fig:badex}
	\end{figure}
							
	Consider the complete bipartite graph $K_{\Delta,\Delta}$. 
	Suppose we initially color every left side vertex with the same color, green, and use $\Delta$ colors on the right half, including the color green (see~\Cref{fig:badex}). In this configuration, every left side vertex is conflicted, and there is only one conflicted right side vertex $v$. Furthermore, the left side vertices have only one free color, and hence would each expect to recolor $\Delta{+}1$ times if selected. 
						  
	On average, we recolor half of the left side vertices before fixing the right side vertex $v$ (at which point the process terminates). Hence, the total expected run time is $\Omega(n\Delta)$. 
\end{proof}

\subsection{Random Start, Random Order}\label{sec:proof}~\\
In this section, we prove~\Cref{thm:main}, restated here for convenience.
\main*
\noindent Our strategy is to localize our analysis to an arbitrary vertex $v$ and then proceed by coupling. In particular, we couple $v$ with an arbitrary vertex from the clique of size $\deg(v)$, whose behavior we understand from the coupon collector coupling.

Now, fix $v$ to be an arbitrary vertex.
\begin{lemma}\label{lem:main}
	The expected number of recolorings of $v$ is $\leq H_{\deg(v)}$.
	The expectation is over (a) the random initial coloring $\chi_0$ in Step~\ref{alg2:1}, (b) the random order in which conflicted vertices are picked in Step~\ref{alg2:2}, and (c) the randomness in the recoloring in Step~\ref{alg2:3}.
\end{lemma}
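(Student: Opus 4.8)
The plan is to first reduce \Cref{thm:main} to \Cref{lem:main} and then prove the lemma by a coupling to a clique. For the reduction, note that the total number of recolorings equals $\sum_v(\text{recolorings of }v)$, so by linearity of expectation \Cref{lem:main} gives an expected total of at most $\sum_v H_{\deg(v)} \le n H_\Delta = O(n\log\Delta)$, which is exactly \Cref{thm:main}. Thus it suffices to bound the recolorings of a single fixed $v$.

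To prove \Cref{lem:main}, I would localize the analysis to $v$ and argue that, from $v$'s point of view, the hardest instance is the clique. Write $d=\deg(v)$. The only thing that forces $v$ to recolor is a neighbor sharing its color, so the relevant state is the set $F_t\subseteq\{1,\dots,\Delta+1\}$ of colors currently occupied by $v$'s neighbors, which always satisfies $|F_t|\le d$. I would set up a coupling between the \pColor run on $G$ and a coupon-collector run on a clique on $d+1$ vertices, so that the distinguished clique vertex $u$ also has degree $d$, feeding both processes the same stream of uniform color draws. The goal is to maintain the invariant that $v$ is unconflicted whenever $u$ is, so that $v$ is selected, and hence recolors, no more often than $u$.

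Two structural facts should drive the domination, and I would isolate each as a monotonicity claim. First, $v$ has a palette of size $\Delta+1\ge d+1$, at least as large as the clique's, so at every step $v$ has at least as many free colors as $u$ and settles at least as easily. Second, $v$'s neighbors may collide with one another, i.e.\ $|F_t|$ can be strictly less than $d$, which only shrinks the set of colors forbidden to $v$ relative to the clique, where the $d$ neighbors of $u$ are driven toward $d$ distinct colors. Granting the coupling, the number of recolorings of $v$ is stochastically dominated by that of $u$, which I would compute from the coupon-collector coupling of \Cref{ex:clique}: on the clique the expected number of recolorings per vertex, not counting the initial assignment, is $H_{d+1}-1\le H_d$, since $H_{d+1}-1 = H_d-\tfrac{d}{d+1}$. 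Combining the two bounds yields $\mathbb{E}[\text{recolorings of }v]\le H_{\deg(v)}$.

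The main obstacle is making the coupling rigorous, because $v$'s neighbors are not inert: each has its own neighborhood outside $v$ and evolves according to steps of the algorithm that never touch $v$. I therefore expect the real work to be in choosing a potential or a stochastic-domination invariant on the forbidden-color set $F_t$ that is preserved under every type of update — a recoloring of $v$, a recoloring of a neighbor, and a recoloring elsewhere — and in reconciling the two processes' selection rules, since the clique and $G$ need not pick the same conflicted vertex at the same time. Verifying that the invariant survives all of these updates, and that the spurious selections in $G$ can only help $v$, is where the care is needed; the harmonic-number arithmetic and the reduction are routine by comparison.
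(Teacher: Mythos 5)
Your high-level plan --- localize to a single vertex $v$, dominate its recolorings by those of a clique vertex, then invoke a coupon-collector bound --- is the same strategy the paper uses, and your reduction to \Cref{thm:main} and the harmonic-number arithmetic ($H_{d+1}-1\le H_d$) are fine. But there is a genuine gap exactly where you defer to ``the real work'': the coupling is never constructed, and the dynamic, stream-of-draws coupling you describe is not the route that works. The difficulty you correctly worry about (neighbors of $v$ evolving through steps that never touch $v$, and the two processes selecting different vertices at different times) is dissolved in the paper by one structural observation that your proposal is missing: in \pColor, a vertex selected in Step~\ref{alg2:2} ends its while-loop unconflicted and can never become conflicted again, so \emph{no vertex is ever selected twice}. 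Consequently Step~\ref{alg2:2} can be simulated by drawing a single uniformly random permutation $\pi$ of all vertices up front and processing conflicted vertices in $\pi$-order. This collapses the entire dynamics into a static, one-shot analysis of the moment $v$ is selected: $v$ recolors at all iff $\chi_0(v)\in\chi_0\lpn A_\pi(v)\rpn$ (neighbors before $v$ in $\pi$ necessarily end at colors different from $\chi_0(v)$, while neighbors after $v$ still carry their initial colors), and conditioned on recoloring, the number of recolorings is geometric with success probability $\free_{\pi,\chi_0}(v)/D$. There is then no invariant to maintain ``under every type of update,'' no time alignment between the two processes, and no common stream of draws to define: the clique comparison becomes a pointwise (per $(\pi,\chi_0)$) inequality between free-color counts, which is the content of \Cref{clm:ub}. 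Relatedly, your phrase ``$v$ is selected, and hence recolors, no more often than $u$'' misidentifies what must be compared: both vertices are selected at most once, and the quantity to dominate is the indicator of being selected times the length of the while-loop.

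A second, more minor issue: your comparison process is the clique on $d+1$ vertices with a palette of $d+1$ colors, and you propose to feed ``the same stream of uniform color draws'' to it and to a process with palette $\Delta+1$; this is not well defined without an explicit map between palettes of different sizes, and constructing one that preserves your invariant is again precisely the part left undone. The paper sidesteps this by running \pColor on the $(d{+}1)$-clique with the \emph{full} palette of $D=\Delta+1$ colors (coupling only the initial colors and the induced order on $v$'s closed neighborhood), and only at the end comparing $\frac{D}{D-1}+\cdots+\frac{D}{D-d}$ to $(d+1)H_d$ term by term. Your target bound of $H_{d+1}-1$ per clique vertex is in fact true, and slightly stronger than the $H_d$ the lemma needs, but as written it is asserted via an unproved stochastic domination rather than derived; the permutation simulation is the missing idea that would turn your sketch into a proof.
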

\Cref{lem:main} immediately implies \Cref{thm:main} because the expected number of recolorings of any vertex is thus $\leq H_{\Delta} = O(\log \Delta)$.
%
Again, it is essential to take the expectation over the random initial colorings $\chi_0$, otherwise \Cref{fig:badex} would act as a counterexample.

\begin{proof}
	We begin by setting some notation. As usual, we let $\Gamma(v)$ denote the neighborhood of $v$, not including $v$ itself. For simplicity let $d := \deg(v)$, and $D := \Delta+1$. Next, observe that Step~\ref{alg2:2} in \pColor can be simulated by first selecting a random permutation $\pi$ of the vertices, and then selecting the conflicted vertices in the $\pi$ order. This is valid because no vertex can ever be chosen twice in Step~\ref{alg2:2} (unlike in \npColor). Recall that $\chi_0$ is the initial coloring of the graph. 
																														
	Given $\pi$ and $\chi_0$, we define $\recolors_{\pi,\chi_0}(v)$ to be the random variable indicating the number of times $v$ recolors given that the initial coloring was $\chi_0$ and the order of vertices was $\pi$. Note that the randomness of $\recolors_{\pi,\chi_0}(v)$ arises solely from Step~\ref{alg2:3} of \pColor. Our goal is to bound $\Exp_{\pi, \chi_0} \Exp [\recolors_{\pi,\chi_0}(v)]$, the expected number of recolors of $v$, averaged over all $\pi$ and $\chi_0$.
																														
	Given $\pi$, let $B_\pi(v)$ and $A_\pi(v)$ denote the subsets of $\Gamma(v)$ which come {\em before} and {\em after} $v$ in the permutation $\pi$, respectively. Let $\free_{\pi, \chi_0}(v)$ be the random variable denoting the number of colors {\em not} used by $\Gamma(v)$ when $v$ begins recoloring, given that the initial coloring is $\chi_0$ and the order is $\pi$. 
																																												
	We now proceed with a coupling argument. Let $v_1, v_2, \ldots, v_d$ be an arbitrary labeling of $\Gamma(v)$. Let $K$ denote the $(d{+}1)$-sized clique with vertices arbitrarily labeled $w,w_1,\ldots, w_d$. We now consider the \pColor process on $K$, but using $D$ colors even if $d < \Delta$. We couple the initial coloring $\chi'_0$ with $\chi_0$, and the order $\pi'$ with $\pi$. In particular, let $\chi'_0(w) = \chi_0(v)$ and $\chi'_0(w_i) = \chi_0(v_i)$ for all $1\leq i\leq d$. Let the order $\pi'$ of $\{w,w_1,\ldots, w_d\}$ be equal to the order $\pi$ restricted to $\{v,v_1,\ldots, v_d\}$, with the same vertex pairings as before. To be clear, $A_{\pi'}(w) = \{w_i: v_i \in A_\pi(v)\}$, and $B_{\pi'}(w)=\{ w_i : v_i \in B_\pi(v)\}$. Finally, $\free_{\pi', \chi'_0}(w)$ is the number of colors \textit{not} used by $\Gamma(w)$ when $w$ begins recoloring.
																													
	\begin{lemma}\label{clm:ub}
		For any $\chi_0$ and $\pi$, we have $\Exp[\recolors_{\pi,\chi_0}(v)] \leq \Exp[\recolors_{\pi',\chi'_0}(w)]$.
	\end{lemma}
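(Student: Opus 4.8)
The plan is to reduce everything, for a fixed $\pi$ and $\chi_0$ (hence fixed $\pi',\chi'_0$ under the coupling), to two data: whether the center is conflicted when its turn arrives, and the free count $\free$ at that moment. First I would record the elementary fact that if a center begins recoloring with exactly $f$ colors free among its neighbors, then---since its neighbors' colors are frozen while it recolors---each fresh draw lands on a free color independently with probability $f/D$, so the number of recolorings is geometric with mean $D/f$. Because the only remaining randomness in $\recolors_{\pi,\chi_0}(v)$ comes from Step~\ref{alg2:3} of the before-neighbors (which determines $\free_{\pi,\chi_0}(v)$) and of $v$ itself, averaging the geometric mean over the before-neighbors gives that $\Exp[\recolors_{\pi,\chi_0}(v)]$ is $0$ when $v$ is not conflicted and $\Exp[D/\free_{\pi,\chi_0}(v)]$ when it is; the same holds verbatim for $w$.

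Next I would show that the conflicted/non-conflicted status is \emph{identical and deterministic} for $v$ and $w$. The key observation is that by the time $v$'s turn arrives, every before-neighbor carries a color different from $\chi_0(v)$: whether or not it recolored, it ended its turn non-conflicted, hence distinct from its neighbor $v$, whose color is still $\chi_0(v)$ (and stays so, since $v$ has not yet been selected). Therefore $v$ is conflicted if and only if $\chi_0(v)$ appears in the set $U_A$ of colors carried by the after-neighbors $A_\pi(v)$---and these are just their initial colors, so the condition depends only on $\pi$ and $\chi_0$. Since the coupling sets $\chi'_0(w)=\chi_0(v)$ and gives $A_{\pi'}(w)$ the same initial colors, $w$ is conflicted under exactly the same (deterministic) condition. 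If neither is conflicted both expectations are $0$; otherwise both always recolor, and it remains to compare $\Exp[D/\free_{\pi,\chi_0}(v)]$ with $\Exp[D/\free_{\pi',\chi'_0}(w)]$.

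The heart of the argument---and the step I expect to be the main obstacle---is to pin down $\free_{\pi',\chi'_0}(w)$ in the clique and dominate $\free_{\pi,\chi_0}(v)$ by it. In $K$ every before-neighbor $w_i$ is adjacent to all other clique vertices, so when $w_i$ finishes its turn its color avoids every other current color; combined with the persistence property that a settled vertex never becomes conflicted again, this forces the before-neighbors to occupy $\lvert B_{\pi'}(w)\rvert$ pairwise-distinct colors, all disjoint from $U_A$. Hence $\free_{\pi',\chi'_0}(w)=D-\lvert B_{\pi'}(w)\rvert-\lvert U_A\rvert$ is a deterministic constant. In $G$ the before-neighbors need only avoid their own neighborhoods, so the colors they occupy number at most $\lvert B_\pi(v)\rvert$ and together with $U_A$ cover at most $\lvert B_\pi(v)\rvert+\lvert U_A\rvert$ colors; thus $\free_{\pi,\chi_0}(v)\ge D-\lvert B_\pi(v)\rvert-\lvert U_A\rvert=\free_{\pi',\chi'_0}(w)$ pointwise. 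Since $x\mapsto D/x$ is decreasing and the right-hand free count is constant, $\Exp[D/\free_{\pi,\chi_0}(v)]\le D/\free_{\pi',\chi'_0}(w)=\Exp[\recolors_{\pi',\chi'_0}(w)]$, giving the lemma. The delicate points are the clique structural claim---I would need to treat carefully the before-neighbors that were already non-conflicted at their turn (they keep an initial color that is nonetheless distinct from all others)---and verifying $\free_{\pi',\chi'_0}(w)\ge 1$ so that $w$'s recoloring terminates, which holds because $K$ has at most $\Delta+1$ vertices.
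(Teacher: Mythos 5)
Your proposal is correct and takes essentially the same route as the paper's proof: you establish that the conflict status of $v$ and $w$ coincides under the coupling (both recolor iff $\chi_0(v)$ appears among the after-neighbors' initial colors), that $\free_{\pi,\chi_0}(v) \geq \free_{\pi',\chi'_0}(w)$ pointwise via the clique's structural guarantee that before-neighbors occupy distinct colors disjoint from the after-neighbors' colors, and then compare the resulting geometric means. Your extra precision---that the conflict indicator is deterministic given $\pi,\chi_0$, and that the clique's free count equals the explicit constant $D - |B_{\pi'}(w)| - |U_A|$---merely makes explicit what the paper leaves implicit.
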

	\begin{proof}
		Observe that $v$ has to recolor iff\footnote{We use the notation $\chi_0(S) := \{ \chi_0(z) : z \in S\}$.} $\chi_0(v) \in \chi_0\lpn A_\pi(v)\rpn$. This is because each vertex of $B_\pi(v)$ necessarily fixes to a different color than $v$'s color (which is still $\chi_0(v)$) prior to $v$'s turn. Thus, the only way $v$ could be still be conflicted is if $\chi_0(v) \in  \chi_0\lpn A_\pi(v)\rpn$.
		Similarly, $w$ has to recolor iff $\chi'_0(w) \in \chi'_0\lpn A_{\pi'}(w)\rpn$. 
		But $\chi_0\lpn A_\pi(v) \rpn = \chi'_0\lpn A_{\pi'}(w)\rpn$ and $\chi_0(v) = \chi'_0(w)$, so $\recolors(v) > 0$ iff $\recolors(w) > 0$ under our coupling.
																																																										
		Next, observe that $\free_{\pi, \chi_0}(v) \geq \free_{\pi', \chi'_0}(w)$ under our coupling.	
		To see this, note that $\free_{\pi, \chi_0}(v)$ is equivalently the total number of colors, $D$, minus the number of different colors used by $\Gamma(v)$ when $v$ begins recoloring. In the case of the $(d{+}1)$-clique $K$, when $w$ begins recoloring, we can guarantee that the set of colors used by $B_{\pi'}(w)$ has size $|B_{\pi'}(w)|$ and is disjoint from the set of colors used by $A_{\pi'}(w)$, because the vertices are all connected. So $A_{\pi}(v)$ and $A_{\pi'}(w)$ use the same colors, and $B_{\pi'}(w)$ uses at least as many additional colors as $B_{\pi}(v)$.
																																																										
		Finally, note that $\recolors_{\pi,\chi_0}(v)$ is either $0$ or the geometric random variable whose probability parameter is $\free_{\pi,\chi_0}(v)/D$,
		with a similar statement for $w$. Thus,
		\begin{align*}
			  \E{\recolors_{\pi,\chi_0}(v)} & \quad=\E{\ind{\recolors_{\pi,\chi_0}(v) > 0} \cdot \frac{D}{\free_{\pi,\chi_0}(v)}}         \\
			                                & \quad\leq \E{\ind{\recolors_{\pi',\chi'_0}(w) > 0} \cdot \frac{D}{\free_{\pi',\chi'_0}(w)}} 
									   	  \quad= \E{\recolors_{\pi',\chi'_0}(w)}.                                                     
		\end{align*}
\end{proof}
																													
	\begin{lemma}\label{clm:clq}
		For any $1 \leq i \leq d$, we have \[\Exp_{\pi', \chi'_0}\Exp[\recolors_{\pi',\chi'_0}(w)] = \Exp_{\pi', \chi'_0}\Exp[\recolors_{\pi',\chi'_0}(w_i)]\]
	\end{lemma}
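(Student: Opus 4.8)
The plan is to prove \Cref{clm:clq} by a pure symmetry argument, exploiting that the clique $K$ is invariant under every permutation of its vertex set $\{w, w_1, \ldots, w_d\}$. First I would record that the three sources of randomness defining the \pColor process on $K$ are all exchangeable across vertices: the initial coloring $\chi'_0$ assigns to each vertex an independent uniform color from the $D$ colors, the order $\pi'$ is a uniformly random permutation of the $d{+}1$ vertices, and the persistent recoloring draws of Step~\ref{alg2:3} are i.i.d.\ uniform. None of these laws distinguishes $w$ from $w_i$.

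Next I would fix the transposition $\tau$ of the vertex set that swaps $w$ and $w_i$ and fixes every other vertex, and observe that relabeling the vertices of $K$ by $\tau$ is a graph automorphism — indeed any permutation is, since $K$ is complete. The key step is to check that the \pColor dynamics commute with this relabeling. Because ``conflicted'' is defined purely in terms of which neighbors share a color, and in a clique every pair of vertices is adjacent, the conflicted set, the identity of the vertex selected via $\pi'$, and the entire subsequent recoloring history all transform equivariantly under $\tau$. Consequently, applying $\tau$ to the vertex labels carries a run with data $(\chi'_0, \pi')$ to a valid run with relabeled data, under which the number of recolorings of $w$ in the original run equals the number of recolorings of $w_i$ in the relabeled run.

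I would then finish by pushing this equivariance through the expectation. Since the law of $(\chi'_0, \pi')$ together with the Step~\ref{alg2:3} coin flips is invariant under the action of $\tau$, the random variables $\recolors_{\pi',\chi'_0}(w)$ and $\recolors_{\pi',\chi'_0}(w_i)$ are equal in distribution, hence have equal expectation over all three layers of randomness. This is exactly the claimed identity $\Exp_{\pi',\chi'_0}\Exp[\recolors_{\pi',\chi'_0}(w)] = \Exp_{\pi',\chi'_0}\Exp[\recolors_{\pi',\chi'_0}(w_i)]$.

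The step I expect to require the most care is verifying the equivariance of the dynamics, i.e.\ confirming that nothing in the selection rule or the persistent-recoloring loop secretly breaks the vertex symmetry. This is precisely where one must use that $K$ is complete — so that adjacency, and therefore the conflicted set, is symmetric among all vertices — and that both $\pi'$ and the recoloring randomness are drawn from vertex-symmetric distributions. Once this equivariance is established, the distributional equality and the conclusion are routine.
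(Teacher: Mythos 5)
Your proposal is correct and is essentially the paper's own proof: the paper disposes of \Cref{clm:clq} with the single observation that it holds ``by symmetry of the clique,'' relying on exactly the exchangeability of $\chi'_0$, $\pi'$, and the Step~\ref{alg2:3} randomness that you verify in detail. Your fleshed-out equivariance argument via the transposition swapping $w$ and $w_i$ is a careful justification of that same symmetry claim, not a different route.
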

	\begin{proof}
		This is by symmetry of the clique. 
		%
		It may be instructive to point out that the randomness of $\chi'_0$ and $\pi'$ are both necessary. For example, if we fix an initial coloring $\chi'_0$, then no initially happy vertex ever recolors. If we fix an ordering $\pi'$, then the last vertex never recolors.
	\end{proof}
	\noindent
	\Cref{clm:ub} and \Cref{clm:clq} together imply that 
	\begin{equation}
		  \Exp_{\pi, \chi_0} \Exp [\recolors_{\pi,\chi_0}(v)]
		  \leq ~\Exp_{\pi', \chi'_0} \Exp[\recolors_{\pi',\chi'_0}(w)]
		  = ~\frac{1}{d+1} \sum_{x\in K} \Exp_{\pi', \chi'_0} \Exp[\recolors_{\pi',\chi'_0}(x)].\label{eq:couple} 
	\end{equation}
	We already know how to bound the sum in~\Cref{eq:couple}. It is precisely the expected total number of recolors of \pColor on a $(d{+}1)$-clique, but with $D$ colors available. In fact, for {\em any} $\pi'$ and $\chi'_0$, we know that 
	\begin{align}
		  \sum_{x\in K} \Exp[\recolors_{\pi',\chi'_0}(x)]
		  & \quad\leq \frac{D}{D-1} + \cdots + \frac{D}{D - d}\label{eq:middle} \\
		  & \quad\leq (d+1)H_{d}.\label{eq:cc}                                  
	\end{align}
	This follows from another coupon collector argument. Here, \Cref{eq:middle} represents the time to collect $d{+}1$ out of $D$ coupons, and \Cref{eq:cc} represents the time to collect $d{+}1$ out of $d{+}1$ coupons. (We can omit the leading 1 in both sums because the initial coloring includes at least a first color.) Simple manipulation shows that $\frac{D}{D -i} \leq \frac{(d+1)}{d+1-i}$ when $i\geq0$, because $D \geq d+1$. Thus, the second inequality holds. 
	Together, \Cref{eq:couple} and \Cref{eq:cc} imply \Cref{lem:main}.
	%
	%
	%
	%
\end{proof}
\noindent
Since $v$ was an arbitrary vertex, \Cref{thm:main} follows from \Cref{lem:main} and linearity of expectation.

\section{Decentralized Coloring with Adversarial Start and Order}\label{sec:drift}
In this section, we prove~\Cref{thm:ub}, restated here for convenience. 
\ub*

Recall that $D := \Delta{+}1$.
Before we begin, observe that this bound is fairly trivial for \pColor:
in Step~\ref{alg2:3}, there is always at least a $\frac{1}{D}$ chance that recoloring satisfies the chosen vertex, implying that each vertex recolors at most $D$ times in expectation. However, in \npColor, there is no similar concept of vertices becoming fixed. Instead, our strategy is to analyze the rate at which \npColor drifts toward convergence. 

One way to analyze drift is with a potential argument. This entails defining a potential function which monotonically changes in expectation with each iteration of the algorithm. 

In our case, we define a potential function $\Phi$ on graph colorings $\chi$ such that $\chi$ is valid iff $\Phi(\chi)$ is some value $\lambda$. Then, we show that $\Exp[\Phi(\chi_t)]$ converges toward $\lambda$ monotonically in expectation at a bounded rate as $t$ increases. Indeed, this is the approach of Bhartia\etal~\cite{iq}, who choose $\Phi(\chi)$ to denote the number of conflicted edges in $G$ with respect to $\chi$, in which case $\lambda = 0$. (An edge is conflicted iff its end points have the same color.)
It is easy to show that the expected number of conflicted edges decreases by at least $1/D$ with each recoloring. If the initial coloring $\chi_0$ is random, then it is easy to see that $\Exp[\Phi(\chi_0)] = O(n)$, which in turn implies\footnote{To make this formal, one needs to use a stopping theorem which Bhartia\etal~\cite{iq} do not explicitly mention. We prove and use such a theorem explicitly.} that the expected number of recolorings is $O(n\Delta)$.
Unfortunately, with an adversarial start, this particular argument only implies an $O\lpn n \Delta^2\rpn$ bound because there can be  $\Omega\lpn n\Delta\rpn$ conflicted edges initially.

The other obvious choice for $\varphi\lpn \chi \rpn$ is the number of conflicted vertices in $G$ under $\chi$. 
However, we can concoct examples where we would actually expect $\varphi\lpn \chi_t\rpn$ to increase, given an adversarial selection. For example, if we recolor $v$ in \Cref{posDrift}, the number of conflicted vertices increases (additively) by 1/4, on average.

\begin{figure}[ht!]
	\centering
	\includegraphics[scale = .3]{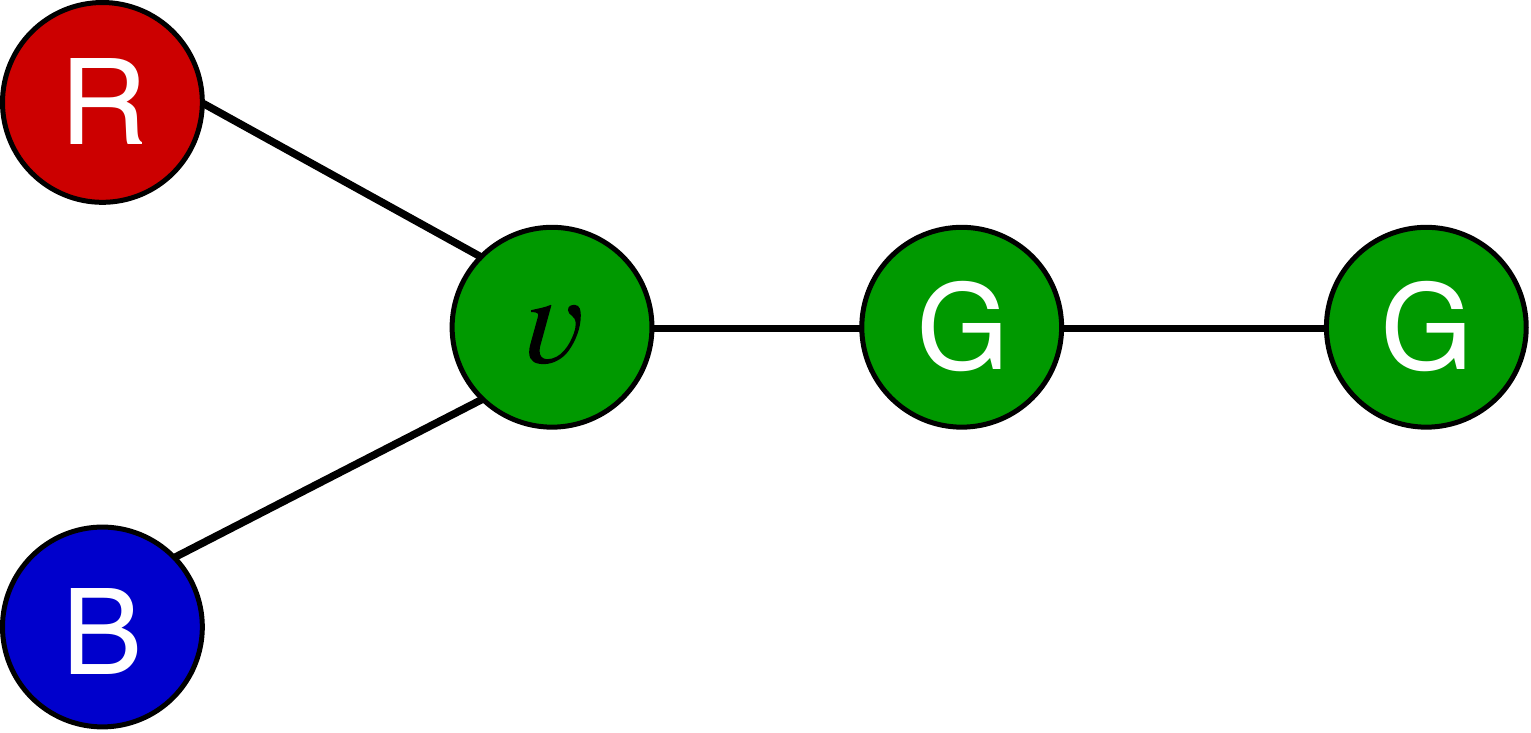}
	\caption{An example of a graph in which the number of conflicted vertices would be expected to increase, given an adversarial selection. There are 4 colors available: $\mathtt{R,G,B}$ and $\mathtt Y$. If $v$, whose color is $\mathtt{G}$, recolors to $\mathtt{G}$, then the number of conflicted vertices remains the same. If $v$ recolors to $\mathtt{Y}$, then the number decreases by $1$. However, if $v$ recolors to $\mathtt{R}$ or $\mathtt{B}$, then the number \textit{increases} by $1$.} \label{posDrift}
\end{figure}

To achieve our $O(n\Delta)$ upper bound, we define $\Phi$ such that $\Phi(\chi)$ is the number of \textit{monochromatic connected components} in $G$ under $\chi$. That is, $\Phi(\chi)$ is the number of connected components induced by the vertices of the same color, taken over all colors.
For example, in \Cref{posDrift}, there are three monochromatic components. 
Note that $\chi$ is a proper coloring iff $\Phi(\chi) = n$, because the monochromatic components all need to be singletons.

\begin{lemma}\label{lem:potn}
	For $t > 0$, let $\chi_t$ be the coloring after the $t^{th}$ recoloring in \npColor. Then,
	\[
		\Exp[\Phi(\chi_t) - \Phi(\chi_{t-1})~|~\chi_{t-1} ~\textrm{\em invalid}~] \geq \frac{1}{D},
	\]
	where the expectation is over the randomness in  Step~\ref{alg:3} for the vertex chosen at the $t^{th}$ recoloring.
\end{lemma}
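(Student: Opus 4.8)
The plan is to track how a single recoloring step changes $\Phi$ by decomposing it into ``delete $v$, then reinsert $v$ with a fresh random color.'' Fix the conflicted vertex $v$ chosen at time $t$, let $c$ be its current color, and let $G-v$ denote the graph with $v$ removed. For each color $j \in \{1,\ldots,D\}$, I would define $a_j$ to be the number of distinct monochromatic color-$j$ components of $G-v$ that contain at least one neighbor of $v$. The point of this quantity is that it isolates the effect of $v$'s own color on $\Phi$: when $v$ carries color $j$, it connects to (and thereby merges) exactly those $a_j$ components into a single component together with itself. Crucially, the numbers $a_j$ and the count of monochromatic components of $G-v$ depend only on $G-v$ and the colors of $V\setminus\{v\}$, which do \emph{not} change when $v$ recolors.

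First I would establish the bookkeeping identity. Writing $\Phi'$ for the number of monochromatic components of $G-v$, reinserting $v$ with color $j$ destroys the $a_j$ touched color-$j$ components and creates one merged component containing $v$ (and when $a_j=0$, $v$ simply becomes a new singleton), so in every case $\Phi = \Phi' + 1 - a_j$. Applying this both before the recoloring (with color $c$) and after it (with the new color $c'$), the common term $\Phi'+1$ cancels, yielding the clean expression $\Phi(\chi_t)-\Phi(\chi_{t-1}) = a_c - a_{c'}$. Since $c'$ is uniform on $\{1,\ldots,D\}$ in Step~\ref{alg:3}, taking the conditional expectation gives
\[
\Exp[\Phi(\chi_t)-\Phi(\chi_{t-1}) \mid \chi_{t-1}] = a_c - \frac{1}{D}\sum_{j=1}^{D} a_j .
\]

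It then remains to bound the two terms. For the first, since $\chi_{t-1}$ is invalid and $v$ is conflicted, $v$ has a neighbor of its own color $c$, so at least one color-$c$ component of $G-v$ is adjacent to $v$ and $a_c \geq 1$. For the second, the components counted across all colors are pairwise vertex-disjoint and each contains a distinct neighbor of $v$; selecting one such neighbor per component injects these components into $\Gamma(v)$, whence $\sum_{j} a_j \leq \deg(v) \leq \Delta = D-1$. Combining the two bounds, the expected change is at least $1 - \tfrac{D-1}{D} = \tfrac{1}{D}$, which is the claim.

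The hard part will be the component-counting identity $\Phi = \Phi' + 1 - a_j$ rather than the final estimate: I must verify that reinserting $v$ merges precisely the $a_j$ adjacent same-color components into one (handling the degenerate case $a_j=0$), and confirm that the ``delete then reinsert'' decomposition neither double-counts components nor overlooks the merging behavior. Once that identity and the disjointness observation $\sum_j a_j \le \deg(v)$ are in hand, everything downstream is a two-line consequence of $a_c \ge 1$ and $\Delta = D-1$.
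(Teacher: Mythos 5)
Your proof is correct and takes essentially the same approach as the paper's: both arguments localize the change in $\Phi$ to the monochromatic components meeting the closed neighborhood of $v$, use $v$'s conflictedness to guarantee a merged component (your $a_c \geq 1$), use vertex-disjointness of those components to bound their total number by $\deg(v) \leq \Delta$, and conclude with $1 - \frac{\Delta}{D} = \frac{1}{D}$. Your delete-and-reinsert identity $\Phi = \Phi' + 1 - a_j$ is a clean exact form of the paper's bookkeeping, which instead compares the before/after counts $m_{t-1}(c)$ and $m_t(c)$ of components of color $c$ touching $\Gamma(v)\cup\{v\}$ via term-by-term inequalities; the two are equivalent since $a_c = m_{t-1}(c)$ for every color $c$ other than $v$'s current one.
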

\begin{proof}
																														
	Let $v$ be the vertex which recolors at time $t$ (so that $\chi_t$ has the new color of $v$), where $v$ could be chosen adversarially.
	Given a color $c \in \{1,2,\ldots, D\}$ define $m_t(c)$ to be the number of monochromatic components with respect to $\chi_t$ of color $c$ which contain \emph{at least} one vertex from $\Gamma(v)\cup \{v\}$. We exclude components of color $c$ which have no vertices in $\Gamma(v)\cup \lbr v \rbr$ because $v$'s recoloring cannot affect those components. Similarly define $m_{t-1}(c)$. 
	Observe that
	\begin{align}
		\Phi(\chi_t) - \Phi(\chi_{t-1}) = \sum_{c \in [D]} \left(m_t(c) - m_{t-1}(c)\right).\label{eq:subtract} 
	\end{align}
	With this, the proof follows from three observations. First, $m_t(\chi_t(v)) = 1$ because all adjacent components of $v$'s color are connected through $v$.
	Second, for any {\em other} color $c\neq \chi_t(v)$, we have $m_{t}(c) - m_{t-1}(c) ~\geq~ 0$ (meaning improvement). This is because recoloring $v$ to a color besides $c$ cannot possibly create any new paths of color $c$. Third, $\sum_{c \in [D]} m_{t-1}(c) \leq \Delta$. This is because $v$ is conflicted with respect to $\chi_{t-1}$ and hence has the same color as one of its neighbors. 
																														
																														\noindent
	Therefore,
	\[
	 \E{\Phi(\chi_t) - \Phi(\chi_{t-1})\given\chi_{t-1}~\textrm{invalid}} 
	= ~\E{\sum_{c \in [D]} \left(m_t(c) - m_{t-1}(c)\right)\bgiven\chi_{t-1}}
	\]
	\noindent
	which in turn evaluates to 
	\begin{align}
		&\E{\sum_{c \in [D]} \left(m_t(c) - m_{t-1}(c)\right)\bgiven\chi_{t-1}}\notag \\
		  & ~~~~~~~~~~~~~ = \sum_{c \in [D]} \Big(\Pr[\chi_t(v) = c]\cdot \left(m_t(c) - m_{t-1}(c)\right)
+ \Pr[\chi_t(v) \neq c]\cdot (m_t(c) - m_{t-1}(c))  \Big)                \label{eq:a}   \\
		  & ~~~~~~~~~~~~~\geq \sum_{c \in [D]} \Pr[\chi_t(v) = c]\cdot \left(1 - m_{t-1}(c)\right)   \ =\frac{1}{D} \cdot \sum_{c \in [D]} \left(1 - m_{t-1}(c)\right)\label{eq:b}                     \\
		  & ~~~~~~~~~~~~~\geq  1 - \frac{\Delta}{D} ~~~=~~~ \frac{1}{D}.\label{eq:c}                                      
	\end{align}
	Recall that the expectation is over the random recoloring of $v$ at time $t$. 
	We lose the conditioning on $\chi_{t-1}$ in ~\Cref{eq:a} because the new color is independently random. Also note that $m_{t-1}$ is fixed once we know $\chi_{t-1}$. 
	\Cref{eq:a} follows from \Cref{eq:subtract}. \Cref{eq:b} follows from the first two observations mentioned after \Cref{eq:subtract}.
	\Cref{eq:c} follows from the third observation and the fact that $D = \Delta+1$.
	%
	%
	%
	%
\end{proof}

%

\noindent
Using~\Cref{lem:potn} to prove that the expected stopping time $\tau(G)$ of our process is $O(n\Delta)$ requires one more theorem.
In particular, we use the following adjusted version\footnote{We don't claim novelty here; a theorem similar to \Cref{stopping} may exist elsewhere.} of Wald's equation~\cite{wald}. We also note that this allows one to provide a formal proof of the Bhartia\etal~\cite{iq} claim.
\begin{lemma}\label{stopping}
	Let $\varphi$ be a real-valued function of colorings for a graph $G$ such that $\chi$ is valid iff $\varphi\lpn \chi \rpn = \lambda$, for some constant $\lambda$. Let $\chi_t$ be the state of $\chi$ after $t$ recolorings. 
	Let $\tau(G)$ be the random number of recolorings required to produce a valid coloring of $G$.
	
	If $\E{ \lbar \lambda - \varphi\lpn \chi_{t-1} \rpn\rbar - \lbar \lambda - \varphi \lpn \chi_{t}\rpn\rbar \mgiven \chi_{t-1} \text{ \emph {invalid}}}~\geq~C$ for some positive constant $C$, then $\E{\tau\lpn G \rpn } \leq \E{\lbar \lambda - \varphi\lpn \chi_0\rpn\rbar }/C$.
\end{lemma}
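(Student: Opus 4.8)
The plan is to recognize this as a drift (optional-stopping) argument for the nonnegative process measuring the distance of the potential from its target value. First I would set $Y_t := \lbar \lambda - \varphi(\chi_t)\rbar$, so that $Y_t \geq 0$ and, by hypothesis, $Y_t = 0$ exactly when $\chi_t$ is valid; thus $\tau(G) = \min\{t : Y_t = 0\}$ is the first hitting time of $0$, and I would extend the process by $\chi_t := \chi_{\tau(G)}$ for $t \geq \tau(G)$ so that it freezes (and stays valid) after stopping. The key reformulation is that the event $\{\chi_{t-1}~\textrm{invalid}\}$ coincides with $\{\tau(G) \geq t\}$, and on this event the hypothesis reads $\E{Y_{t-1} - Y_t \mid \mathcal{F}_{t-1}} \geq C$, where $\mathcal{F}_{t-1}$ is the history through time $t-1$ (this is exactly the per-state form supplied by~\Cref{lem:potn}).

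Next I would introduce the auxiliary process $M_t := Y_{t \wedge \tau(G)} + C\,(t\wedge\tau(G))$ and show it is a supermartingale. On $\{\tau(G) < t\}$ both truncated indices equal $\tau(G)$, so $M_t = M_{t-1}$; on $\{\tau(G) \geq t\}$ the chosen vertex sits at an invalid coloring, so the drift hypothesis gives $\E{M_t - M_{t-1}\mid \mathcal{F}_{t-1}} = \E{Y_t - Y_{t-1}\mid\mathcal{F}_{t-1}} + C \leq -C + C = 0$. Integrability is not an issue: a finite graph has finitely many colorings, so $\varphi$ and hence every $Y_t$ is bounded, and $t\wedge\tau(G) \leq t$ is bounded for each fixed $t$.

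Finally I would extract the bound. From the supermartingale property, $\E{M_t} \leq \E{M_0} = \E{\lbar\lambda - \varphi(\chi_0)\rbar}$; dropping the nonnegative term $Y_{t\wedge\tau(G)}$ gives $C\cdot\E{t\wedge\tau(G)} \leq \E{M_t} \leq \E{\lbar \lambda - \varphi(\chi_0)\rbar}$, i.e. $\E{t\wedge\tau(G)} \leq \E{\lbar\lambda-\varphi(\chi_0)\rbar}/C$ for every $t$. Letting $t\to\infty$, monotone convergence ($t\wedge\tau(G)\uparrow\tau(G)$) yields $\E{\tau(G)} \leq \E{\lbar\lambda - \varphi(\chi_0)\rbar}/C$, as claimed; in particular $\tau(G)$ is finite almost surely.

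The main obstacle, I expect, is not the algebra but the bookkeeping needed to make the stopping argument rigorous: carefully aligning the event ``$\chi_{t-1}$ invalid'' with $\{\tau(G)\geq t\}$ (which requires the convention that the process freezes once valid), confirming that the hypothesis is strong enough to give a conditional rather than merely unconditional drift so that $M_t$ is genuinely a supermartingale, and invoking a limiting theorem (monotone convergence on $t\wedge\tau(G)$, equivalently a truncated optional-stopping theorem) to pass from the uniform bound on $\E{t\wedge\tau(G)}$ to a bound on $\E{\tau(G)}$ itself. If one only has the drift averaged over invalid states rather than per-state, the same conclusion still follows by instead bounding $\E{M_t} - \E{M_{t-1}}$ directly, writing $\E{M_t}-\E{M_{t-1}} = \E{(Y_t - Y_{t-1} + C)\,\ind{\tau(G)\geq t}} \leq -C\,\Prob{\tau(G)\geq t} + C\,\Prob{\tau(G)\geq t} = 0$, which I would mention as a robustness remark.
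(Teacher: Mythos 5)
Your proof is correct, but it takes a genuinely different route from the paper's. The paper argues Wald-style: it defines $Z_t$ as the one-step decrease of $\lbar \lambda - \varphi\lpn\chi_t\rpn\rbar$ (set to $0$ after time $\tau(G)$), telescopes $\sum_{t=1}^{\tau(G)} Z_t = \lbar\lambda - \varphi\lpn\chi_0\rpn\rbar$, and then exchanges expectation with the infinite sum $\sum_{t=1}^\infty \E{Z_t\cdot\ind{\tau(G)\geq t}}$ via \Cref{infLinearity}; to justify that exchange it must establish \emph{a priori} that $\E{\tau(G)}$ is finite, which it does with a crude $nD^n$ bound. Your truncated supermartingale argument --- $M_t := Y_{t\wedge\tau(G)} + C\lpn t\wedge\tau(G)\rpn$ with $Y_t := \lbar\lambda-\varphi\lpn\chi_t\rpn\rbar$, the uniform bound $\E{M_t}\leq\E{M_0}$, and monotone convergence applied to $t\wedge\tau(G)\uparrow\tau(G)$ --- buys something concrete here: finiteness of $\E{\tau(G)}$ comes out as a byproduct of the uniform bound $C\,\E{t\wedge\tau(G)}\leq \E{\lbar\lambda-\varphi\lpn\chi_0\rpn\rbar}$ rather than being a prerequisite, so no analogue of the $nD^n$ bound is needed, and no interchange of expectation with an infinite sum ever occurs. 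One caution, which you flag yourself: the lemma's hypothesis conditions only on the \emph{event} that $\chi_{t-1}$ is invalid, not on the full history $\mathcal{F}_{t-1}$, so $M_t$ is not literally known to be a supermartingale from the stated hypothesis alone. Consequently your closing ``robustness remark'' --- bounding $\E{M_t}-\E{M_{t-1}} = \E{\lpn Y_t-Y_{t-1}+C\rpn\cdot\ind{\tau(G)\geq t}} \leq 0$ directly from the averaged hypothesis, using that under the freezing convention $\lbr\tau(G)\geq t\rbr$ coincides with the event that $\chi_{t-1}$ is invalid --- is not merely a remark but the version of your argument that actually matches the hypothesis as written; it does go through, and together with the rest of your outline it constitutes a complete and correct proof. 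In essence both proofs sum the same per-step drift inequality over $t$: the paper pays for the summation with an a priori finiteness bound, while your truncation-plus-monotone-convergence scheme pays nothing.
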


We first show how \Cref{lem:potn} and \Cref{stopping} imply \Cref{thm:ub}. Recall that $\chi$ is valid iff $\varphi(\chi) = n$. We have
\begin{equation*}
	  \E{\lbar n - \varphi\lpn \chi_{t-1}\rpn\rbar - \lbar n - \varphi\lpn \chi_t\rpn\rbar \bgiven \chi_{t-1} \text{ invalid}} 
	 = \E{\varphi\lpn \chi_t\rpn - \varphi\lpn \chi_{t-1}\rpn \mgiven \chi_{t-1} \text{ invalid}}                  \geq \frac{1}{D}.                                                                                                   
\end{equation*}
Because $1\leq \varphi(\chi) \leq n$, we have $\E{ \lbar n - \varphi\lpn \chi_0 \rpn\rbar} \leq n - 1$. Hence, $\E{\tau(G)} \leq (n-1)D$. This completes the proof of~\Cref{thm:ub}. \qed
\noindent
We now prove \Cref{stopping}.
\begin{proof} 	
	For each $t \in \Zp$, let $$Z_t \coloneqq \begin{cases}
	\lbar \lambda - \varphi\lpn \chi_{t-1} \rpn\rbar - \lbar \lambda - \varphi \lpn \chi_{t}\rpn\rbar & t \leq \tau(G) \\
	0 & \text{otherwise.}\end{cases}$$ By assumption, $\varphi\lpn\chi_{\tau(G)}\rpn = \lambda$. Hence, $\sum_{t = 1}^{\tau(G)} Z_t$ telescopes to $\lbar \lambda -  \varphi \lpn \chi_{0}\rpn\rbar$. From here, we see that
	\begin{equation}
		 \E{\lbar\lambda - \varphi\lpn \chi_0\rpn\rbar} = \E{\sum_{t=1}^{\tau(G)} Z_t} = \E{ \sum_{t=1}^\infty Z_t \cdot \ind{\tau(G) \geq t}}.\label{initialBound} 
	\end{equation} 
	Next, we prepare to apply the following analogue of linearity of expectation for infinite sums.
		
	\begin{theorem}\label{infLinearity}
		~\\\textsc{(Infinite Linearity of Expectation \cite{probAndComputing})}\\\\Let $X_1, X_2, \ldots$ be random variables. If $\sum_{t=1}^\infty \E{\lbar X_t\rbar}$ converges, then
		$$\E{ \sum_{t=1}^\infty X_t} = \sum_{t=1}^\infty \E{X_t}.$$
	\end{theorem}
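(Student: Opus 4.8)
The plan is to deduce this from two standard convergence results applied to the partial sums $S_n \coloneqq \sum_{t=1}^n X_t$: the Monotone Convergence Theorem (to handle the absolute values and certify integrability of the limit) and the Dominated Convergence Theorem (to move the limit through the expectation). The hypothesis $\sum_{t=1}^\infty \E{\lbar X_t\rbar} < \infty$ is used for exactly one thing, namely to manufacture an integrable dominating random variable; everything else is routine.

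First I would establish that both sides are even well-defined. The nonnegative partial sums $\sum_{t=1}^n \lbar X_t\rbar$ are nondecreasing in $n$ and increase pointwise to $Y \coloneqq \sum_{t=1}^\infty \lbar X_t\rbar$, so the Monotone Convergence Theorem (equivalently, Tonelli's theorem for series of nonnegative terms) gives
$$\E{\sum_{t=1}^\infty \lbar X_t\rbar} = \lim_{n\to\infty}\sum_{t=1}^n \E{\lbar X_t\rbar} = \sum_{t=1}^\infty \E{\lbar X_t\rbar},$$
and the rightmost quantity is finite by hypothesis. Hence $Y$ has finite expectation and is therefore finite almost surely; on that full-measure event the series $\sum_{t=1}^\infty X_t$ converges absolutely, so the left-hand side of the theorem is a well-defined, integrable random variable.

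Next I would pass the limit through the expectation. On the event $\{Y < \infty\}$ we have $S_n \to \sum_{t=1}^\infty X_t$ together with the uniform domination $\lbar S_n\rbar \le \sum_{t=1}^n \lbar X_t\rbar \le Y$ for every $n$. Since $Y$ is integrable, the Dominated Convergence Theorem yields $\E{\sum_{t=1}^\infty X_t} = \lim_{n\to\infty}\E{S_n}$. Finally, ordinary finite linearity of expectation gives $\E{S_n} = \sum_{t=1}^n \E{X_t}$, and letting $n\to\infty$ produces the claimed identity. The only delicate point—and the reason absolute convergence is assumed rather than mere convergence of $\sum_t \E{X_t}$—is the first step: one must not assume the left-hand side exists a priori, and it is the nonnegative interchange that simultaneously certifies integrability of $\sum_t X_t$ and supplies the dominating function $Y$. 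Once $Y$ is in hand, the interchange of limit and expectation is immediate.
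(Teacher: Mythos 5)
Your proof is correct. Note, however, that the paper itself does not prove this statement at all: it is quoted verbatim as a known result, with a citation to Mitzenmacher and Upfal's \emph{Probability and Computing}, and is used as a black box inside the proof of the stopping lemma. So there is no internal argument to compare yours against; what you have written is the standard measure-theoretic proof that the cited textbook result rests on. Your two-step structure is exactly the canonical one: Monotone Convergence (equivalently, Tonelli applied to the product of counting measure on $\Zp$ with the underlying probability measure) converts the hypothesis $\sum_t \E{\lbar X_t\rbar} < \infty$ into integrability of $Y = \sum_t \lbar X_t\rbar$, which simultaneously certifies that $\sum_t X_t$ converges absolutely almost surely and serves as the dominating function, after which Dominated Convergence plus finite linearity finishes the argument. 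Your closing remark correctly identifies the one delicate point, namely that the left-hand side must be shown to exist rather than assumed, and your restriction to the full-measure event $\{Y < \infty\}$ (with the series defined arbitrarily on the complementary null set) handles this cleanly. One could phrase the whole thing as a single application of Fubini--Tonelli, but that is a repackaging of the same idea, not a different proof. As a sanity check against the paper's usage: in \Cref{stopping} the summands are $Z_t \cdot \ind{\tau(G) \geq t}$ with $\lbar Z_t \rbar \leq \rho$, and the paper verifies $\sum_t \E{\lbar Z_t \cdot \ind{\tau(G)\geq t}\rbar} \leq \rho\,\E{\tau(G)} < \infty$ precisely so that your hypothesis applies, so your proof is fully compatible with how the theorem is deployed.
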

	%
	%
		
		
	To apply \Cref{infLinearity}, we need to show that $\sum_{t=1}^\infty \E{\lbar Z_t \cdot \ind{\tau(G) \geq t}\rbar}$ converges. Observe that $\varphi$ must be bounded, because it is real-valued and there are only finitely many possible colorings of $G$. Thus,  $\lbar Z_t \rbar \leq \rho$ for some constant $\rho$. Hence,
	\begin{equation}
		\sum_{t=1}^\infty \E{\lbar Z_t \cdot \ind{\tau(G) \geq t}\rbar} 
= \sum_{t=1}^\infty \E{\lbar Z_t\rbar \cdot \ind{\tau(G) \geq t}}
\leq \rho\sum_{t=1}^\infty\E{\ind{\tau(G) \geq t}}
= \rho \cdot \E{\tau(G)}. \label{eq:terms}                    
	\end{equation}
  Trivially, we can bound $\E{\tau(G)}$ by $nD^n$, because at worst we need to select the lone satisfying color for $n$ consecutive vertices, which can be cast as a geometric random variable with probability $\lpn\frac{1}{D}\rpn^n$ that uses at most $n$ recolors per trial. Hence, $\E{\tau(G)}$ is finite. 

  \noindent
  Because \Cref{eq:terms} has positive terms and is bounded above, it indeed converges. Thus, we have
	\begin{align*}
		\E{ \sum_{t=1}^\infty Z_t \cdot \ind{\tau(G) \geq t}}  & \quad= \sum_{t=1}^\infty \E{Z_t \cdot \ind{\tau(G) \geq t}} &   & \hspace{-1cm} (\Cref{infLinearity}) \\
		&\quad= \sum_{t=1}^\infty \E{Z_t \given \tau(G) \geq t} \cdot \Prob{\tau(G) \geq t} \\
		&\quad\geq C\sum_{t=1}^\infty \Prob{\tau(G) \geq t} ~=~ C \cdot \E{\tau(G)}.
	\end{align*}
	With \Cref{initialBound}, we have $\E{\tau(G)}\leq\E{\lbar\lambda - \varphi\lpn \chi_0\rpn\rbar} / C$, which completes the proof of \Cref{stopping}.
\end{proof}
\section{Related Work}

The decentralized model of graph coloring has been studied extensively~\cite{duffy, duffy2, galan, fitz, leith, roughgarden, checco, iq}. Most of this work appears in the networking literature, motivated by the need to minimize the communication between nodes.
We mention two works which are most similar to our work, and comment on the differences. The first work is by Motskin\etal~\cite{roughgarden} which considers an algorithm very similar to \npColor, except that all conflicted vertices simultaneously randomly recolor. 
It is not too hard to show that a $\frac{1}{\Delta{+}1}$-fraction of the nodes become happy in each round, and therefore, $O(\Delta \log n)$-rounds suffice with high probability. Still, this leads to $O(n\Delta)$ recolorings, which is no better than what Bhartia\etal~\cite{iq} achieve (note that the first random recoloring constitutes a random start). The second work is by Checco and Leith~\cite{checco}, which itself generalizes works by Barcelo\etal~\cite{barcelo} and Duffy\etal~\cite{duffy,duffy2}, where again all conflicted vertices recolor simultaneously, but according to a distribution that evolves with time. Their algorithms, which also converge in $O(n\log n)$ rounds, are robust to changes in the graph.

In this paragraph, we describe other decentralized models which are unrelated to the model we study but may be interesting to the reader.
Synchronous graph coloring in minimal number of rounds arises in {\em distributed computing}. Unlike our decentralized setting, this model, first defined in Linial's seminal paper~\cite{linial}, allows nodes to pass messages among each other, and the number of rounds
is one key complexity parameter. Johansson~\cite{johansson} obtains an $(\Delta{+}1)$-coloring in $O(\log n)$ rounds using a simple algorithm is similar to, and indeed inspired by, Luby's MIS algorithm~\cite{luby}.
This was recently improved by Harris, Schneider and Su~\cite{HSS16} to a $O(\sqrt{\log n})$-round algorithm, and more recently to a $O(\mathrm{polyloglog}~ n)$-round algorithm by Chang \etal\cite{pettie}.
$(\Delta{+}1)$-colorings have also recently been considered in the streaming model by Assadi \etal \cite{assadi} where edges stream in and there is only $\tilde{O}(n)$-space available to help maintain a $(\Delta{+}1)$-coloring. The same paper also gives algorithms in the graph-query and MPC  (massively parallel computation) models. $(\Delta{+}1)$-coloring has also recently been considered in the {\em dynamic graph model} where edges may be added or deleted and the objective is to maintain a $(\Delta{+}1)$-coloring with quick updates.
Bhattacharya\etal~\cite{deepcDistrib} describe a randomized algorithm with $O(\log n)$ amortized update time which has very recently been improved upon by Henzinger and Peng~\cite{HP19}, and Bhattacharya\etal~\cite{Bhatt19}.

%
%

\section{Conclusion and Discussion}

In this paper we considered variants of two decentralized graph coloring algorithms: \npColor, introduced by Bhartia\etal~\cite{iq}, and \pColor, our proposed modification. Beyond the \pColor algorithm itself, we produced three primary contributions:
\begin{itemize}
	\item Adversarial start, random order \pColor requires $\Omega(n\Delta)$ expected recolorings in the worst case. 
	\item Random start, random order \pColor requires only $O(n\log\Delta)$ expected recolorings.
	\item Adversarial start, adversarial order \npColor requires only $O(n\Delta)$ recolorings.
\end{itemize}

We proved the first result with a counterexample involving a bipartite graph, the second through coupling and generalization to the coupon collector problem, and the third through analysis of an interesting potential function. We also note that our stopping theorem may be extensible to other stochastic processes that tend to drift toward convergence. Lastly, we formalized the conjecture that the $O(n\log\Delta)$ bound holds for random start, random order \npColor.

It is perhaps instructive to remark that \npColor resembles, in spirit, the celebrated Moser-Tardos~\cite{mt} randomized algorithm for finding satisfying assignments to CSPs obeying the Lov\'asz Local Lemma. Indeed, if we tweak the \npColor algorithm so that we pick a conflicted {\em edge} in every round and randomly recolor its endpoints, we get the Moser-Tardos algorithm. However, $(\Delta{+}1)$-coloring is outside the LLL regime: the probability of a bad event (of getting an unhappy edge) is $p = \frac{1}{\Delta{+}1}$, and the degree $d$ of the dependency graph is $2\Delta{-}2$ (a single edge is independent of all but its $2\Delta{-}2$ neighboring edges), giving $pd \approx 2$, which causes the general analysis of~\cite{mt} to break down.

Our hope is that our new understanding of \pColor may lead to a proof of \Cref{conj:NP}. 
A first question to answer may be whether \pColor always requires more expected recolorings than \npColor on any particular graph. We also wonder whether understanding random start, adversarial order variants could help solve the mystery.

\bibliographystyle{plain}
\bibliography{sosa}

\end{document}